\newtheorem{theorem}{Theorem}
\newtheorem{lemma}[theorem]{Lemma}
	\gdef\xxxmark{%
		\expandafter\ifx\csname @mpargs\endcsname\relax 
		\expandafter\ifx\csname @captype\endcsname\relax 
		\marginpar{xxx}
		\else
		xxx 
		\fi
		\else
		xxx 
		\fi}
	\gdef\xxx{\@ifnextchar[\xxx@lab\xxx@nolab}
	\long\gdef\xxx@lab[#1]#2{\textbf{[\xxxmark #2 ---{\sc #1}]}}
	\long\gdef\xxx@nolab#1{\textbf{[\xxxmark #1]}}
\let\realbfseries=\bfseries
\def\bfseries{\realbfseries\boldmath}
\title{When Can You Tile an Integer Rectangle with Integer Squares?}
\author{%
	MIT CompGeom Group%
	\thanks{%
		Artificial first author to highlight that the other authors
		(in alphabetical order) worked as an equal group.
		Please include all authors (including this one) in your bibliography,
		and refer to the authors as ``MIT CompGeom Group'' (without ``et al.'').}
\and
	Zachary Abel%
	\thanks{Massachusetts Institute of Technology,
	  \protect\url{{zabel,edemaine,jaysonl}@mit.edu}}
\and
	Hugo A. Akitaya%
	\thanks{University of Massachusetts, Lowell, \protect\url{hugo_akitaya@uml.edu}}
\and
	Erik D. Demaine\footnotemark[2]
\and
	Adam C. Hesterberg%
	\thanks{Harvard University, \protect\url{ahesterberg@seas.harvard.edu}}
\and
	Jayson Lynch\footnotemark[2]
}
\date{}
\newcommand{\sq}[2]{%
	\if\relax\detokenize{#2}\relax
	$#1 \times #1$%
	\else
	$#1 \times #2$%
	\fi
}
\begin{document}
	\maketitle
	
	\begin{abstract}
		This paper characterizes when an $m \times n$ rectangle, where $m$ and $n$ are integers, can be tiled (exactly packed) by squares where each has an integer side length of at least $2$. 
		In particular, we prove that tiling is always possible when both $m$ and $n$ are sufficiently large (at least 10).
		When one dimension $m$ is small, the behavior is eventually periodic in $n$ with period 1, 2, or 3.
		When both dimensions $m,n$ are small, the behavior is determined computationally by an exhaustive search.
	\end{abstract}
	
	\section{Introduction}
	
	
	Tilings of rectangles by squares (``squared rectangles'') date back to Dehn in 1903 \cite{dehn1903zerlegung}, who proved that the sides are all \emph{commensurable} (their lengths have rational ratios).
	Thus we can scale such tilings to have vertices on an integer grid, with horizontal and vertical edges of integer length.
	In 1940, Brooks, Smith, Stone, and Tutte \cite{brooks1940dissection} famously studied \emph{perfect squarings} --- tilings of rectangles by \emph{distinct} squares. They refined Dehn's result and proved a converse --- every rectangle with commensurable sides has an infinite family of perfect tilings --- by showing a connection to electrical flow in circuits \cite{brooks1940dissection}.

	Without the perfect (distinctness) constraint, the converse of Dehn's result is obvious: any rectangle with commensurable sides, scaled to have integer dimensions $m \times n$, has a trivial tiling by $mn$ \sq11 squares.
	In this paper, we study one simple approach to preventing this trivial tiling: forbid \sq11 squares.

	Another well-studied approach is to specify a set of square sizes, or more generally rectangle dimensions, that are allowed in the tiling of a rectangle.
	In 1D --- tiling an interval with specified intervals --- all sufficiently large intervals can be tiled, and finding this cut-off size is the famous Frobenius coin/stamp problem \cite{coin-problem,bower2006packing,labrousse2010tiling}.
	The 2D problem was first studied in 1969 by de Bruijn \cite{de1969filling}, who characterized which rectangles can be tiled by a single rectangle.
	In 1978, Erd\H{o}s \cite{erdos-problem} posed two more interesting 2D problems --- allowing two square sizes or three rectangle sizes --- with solutions claimed (but never published) by E. G. Straus.
	Extensive work \cite{fricke1995,bower2004can,bower2006packing,labrousse2010tiling} has developed necessary and sufficient conditions for tiling a given rectangle with given rectangles (each of which can be used arbitrarily many times), as well as higher-dimensional generalizations (packing boxes or tori with boxes).
	In particular, this work formalizes the original claims by Straus, including that any three squares with coprime side lengths can tile all sufficiently large rectangles \cite{bower2006packing,labrousse2010tiling} --- a fact we use here (and reproduce, in a special case).



	If we specify the \emph{number} of squares of each size that must tile a given rectangle (or even square), then the problem is NP-complete \cite{Leung-Tam-Wong-Young-Chin-1990}.
	If we allow arbitrary squares but instead require the \emph{fewest} possible squares,
	Kenyon \cite{kenyon1996tiling} proved upper and lower bounds that match up to constant factors, which Walters \cite{walters2009rectangles} generalized to higher dimensions, while other work considers computational heuristics \cite{monaci2018minimum}.
	Heubach \cite{heubach1999tiling} studied \emph{counting} the number of different rectangle tilings when limiting the square sizes.

	\subsection{Our Results}
	\label{sec:characterization}

	In this paper, we characterize which rectangles of integer side lengths can be tiled by squares each of integer side length at least $2$, as follows:
	\begin{enumerate}[label=(\Roman*)]
		\item \label{char:even} \sq2n and \sq4n rectangles can tiled exactly when $n$ is even.
		\item \label{char:3} \sq3n rectangles can be tiled exactly when $n \equiv 0 \pmod 3$.
		\item \label{char:large} \sq mn rectangles for all $m \geq 5, n \geq 20$ (and symmetrically, $m \geq 20, n \geq 5$) can be tiled.
		\item \label{char:small} Table~\ref{tab:code_output} specifies tileability for all remaining $m,n$ (indeed, for all $m,n < 20$).
	\end{enumerate}
	In particular, Table~\ref{tab:code_output} indicates successful tilings for all $10 \leq m,n \leq 20$, so combined with \ref{char:large}, we obtain that the \sq mn rectangle is tileable without \sq11 squares for all $m,n \geq 10$.

	\begin{table}
		\centering
		\def\T{\checkmark}
		\tabcolsep=0.25em
		\newlength\margin
		\settowidth{\margin}{0}
		\margin=0.5\margin
		\def\M{\hspace*{\margin}}
		\begin{tabular}{>{\columncolor{purple!25}}rcccccccccccccccccc}
		\rowcolor{purple!25}
		\diagbox[width=1.45em,height=1.15em]{\!$_m$}{\clap{$^n$}}
		  &\M2\M&\M3\M&\M4\M&\M5\M&\M6\M&\M7\M&\M8\M&\M9\M&10&11&12&13&14&15&16&17&18&19\\
		 2&\T&  &\T&  &\T&  &\T&  &\T&  &\T&  &\T&  &\T&  &\T&  \\
		 3&  &\T&  &  &\T&  &  &\T&  &  &\T&  &  &\T&  &  &\T&  \\
		 4&\T&  &\T&  &\T&  &\T&  &\T&  &\T&  &\T&  &\T&  &\T&  \\
		 5&  &  &  &\T&\T&  &  &  &\T&\T&\T&  &  &\T&\T&\T&\T&  \\
		 6&\T&\T&\T&\T&\T&\T&\T&\T&\T&\T&\T&\T&\T&\T&\T&\T&\T&\T\\
		 7&  &  &  &  &\T&\T&  &  &\T&  &\T&\T&\T&  &\T&\T&\T&\T\\
		 8&\T&  &\T&  &\T&  &\T&  &\T&  &\T&  &\T&\T&\T&\T&\T&\T\\
		 9&  &\T&  &  &\T&  &  &\T&\T&  &\T&\T&\T&\T&\T&\T&\T&\T\\
		10&\T&  &\T&\T&\T&\T&\T&\T&\T&\T&\T&\T&\T&\T&\T&\T&\T&\T\\
		11&  &  &  &\T&\T&  &  &  &\T&\T&\T&\T&\T&\T&\T&\T&\T&\T\\
		12&\T&\T&\T&\T&\T&\T&\T&\T&\T&\T&\T&\T&\T&\T&\T&\T&\T&\T\\
		13&  &  &  &  &\T&\T&  &\T&\T&\T&\T&\T&\T&\T&\T&\T&\T&\T\\
		14&\T&  &\T&  &\T&\T&\T&\T&\T&\T&\T&\T&\T&\T&\T&\T&\T&\T\\
		15&  &\T&  &\T&\T&  &\T&\T&\T&\T&\T&\T&\T&\T&\T&\T&\T&\T\\
		16&\T&  &\T&\T&\T&\T&\T&\T&\T&\T&\T&\T&\T&\T&\T&\T&\T&\T\\
		17&  &  &  &\T&\T&\T&\T&\T&\T&\T&\T&\T&\T&\T&\T&\T&\T&\T\\
		18&\T&\T&\T&\T&\T&\T&\T&\T&\T&\T&\T&\T&\T&\T&\T&\T&\T&\T\\
		19&  &  &  &  &\T&\T&\T&\T&\T&\T&\T&\T&\T&\T&\T&\T&\T&\T\\
		\end{tabular}
		\caption{Which integer \sq mn rectangles, for $2 \leq m,n \leq 19$, admit tilings with squares of side length at least~$2$. \checkmark\ indicates when a tiling was found by brute force \cite{github}. Figure~\ref{fig:code_tilings} shows the found tilings.}
		\label{tab:code_output}
	\end{table}
	
	Our tilings use only \sq22, \sq33, \sq55, and \sq77 squares, so our result can also be cast in terms of restricting the set of allowed square sizes to these four.

	\section{Characterization}
	The proof of our result is a combination of mathematical arguments for rectangles with at least one side length large enough, and brute-force computer search for the remaining finitely many cases.
	
	First, observe that any even-by-even rectangle can be tiled by \sq22 squares. Thus we need only focus on cases where at least one side length is odd. 
	
	\subsection{Thin Rectangles: \sq2n, \sq4n \ref{char:even}, and \sq3n \ref{char:3}}
	Next we consider what turn out to be the only infinite families of rectangles which cannot be tiled.
	
	A \sq2n rectangle can only be tiled with \sq22 squares (as smaller are forbidden, and larger would not fit), and thus can only be tiled for $n$ even.
	
	For \sq3n rectangles, if we attempt to use a \sq22 square, then there will be a narrow \sq12 region which is impossible to fill. Thus \sq3n squares can only be exactly covered by \sq33 squares and thus can only be tiled when $n \equiv 0 \pmod 3$.
	
	Similarly, we cannot use \sq33 squares in a tiling of a $4\times n$ rectangle, and thus these rectangles are only tileable when $n$ is even.

	\subsection{Large Rectangles \ref{char:large}}
	By contrast, once we reach $m \geq 5$, \sq mn rectangles can be tiled for sufficiently large~$n$.
	Our arguments here are similar to known constructions for any three squares with coprime side lengths \cite{bower2006packing,labrousse2010tiling}, but taking care to exactly compute the dimensions needed for tileability by the three smallest such squares.
	
	\begin{lemma} \label{lem:5xn}
		A \sq5n rectangle can be tiled by \sq22, \sq33, and \sq55 squares when $n \geq 20$,
		or when $n \in \{5,6,10,11,12,15,16,17,18\}$.
	\end{lemma}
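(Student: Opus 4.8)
The plan is to reduce this infinite family to a finite set of \emph{base widths} and then build every admissible width by horizontal juxtaposition. The key observation is that if a \sq5a rectangle and a \sq5b rectangle can each be tiled, then abutting the two tilings along their common height-$5$ side tiles a $5\times(a+b)$ rectangle. Hence the set of tileable widths is closed under addition, i.e.\ it forms a numerical semigroup, and it suffices to exhibit generators whose semigroup is exactly the target set $\{5,6,10,11,12,15,16,17,18\}\cup\{n\ge 20\}$. I claim the two generators $5$ and $6$ do this.

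For the base width $5$, a single \sq55 square tiles the \sq55 rectangle. For the base width $6$, I would split the \sq56 rectangle into a \sq26 horizontal band and a \sq36 horizontal band: the former is tiled by three \sq22 squares and the latter by two \sq33 squares. Note that all three allowed square sizes appear across these two base tilings, so the construction genuinely uses the claimed tile set.

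By the juxtaposition step, every $n$ of the form $5a+6b$ with $a,b\ge 0$ is tileable. Since $\gcd(5,6)=1$, the Sylvester--Frobenius (``Chicken McNugget'') theorem gives Frobenius number $5\cdot 6-5-6=19$, so every $n\ge 20$ is representable; a direct check of the cases $n<20$ shows the representable values there are exactly $5,6,10,11,12,15,16,17,18$. This coincides precisely with the set in the statement, which completes the argument.

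The proof has no serious obstacle: the only real content is exhibiting the two base tilings (in particular the \sq56 band decomposition), after which the conclusion is elementary number theory. The one point requiring care is confirming that the small representable values match the claimed list exactly---neither overshooting nor leaving a gap below $20$---but this is immediate from the Frobenius computation. This is the specialization to the three smallest coprime squares of the general construction referenced above, carried out so that the exact threshold $n\ge 20$ and the small exceptional widths can be read off directly.
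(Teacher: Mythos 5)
Your proposal is correct and matches the paper's proof essentially step for step: the same \sq55 and \sq56 base tilings (your band decomposition into a \sq26 strip of three \sq22 squares and a \sq36 strip of two \sq33 squares is exactly the tiling the paper exhibits in its figure), followed by the same Sylvester--Frobenius argument with Frobenius number $5\cdot6-5-6=19$ and a direct check of the small representable widths.
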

	\begin{proof}
		We can construct a \sq56 rectangle from two \sq33 squares and three \sq22 squares, as shown in Figure~\ref{fig:code_tilings} (top left).
		We also have the trivial tiling of a \sq55 rectangle with one square.
		By combining a sequence of these two tilings, we can tile any \sq5n rectangle where $n = 5 i + 6 j$ is an integer linear combination with $i,j \geq 0$.
		By Sylvester's solution \cite{sylvester1882} to the Frobenius 2-coin problem \cite{coin-problem,bower2006packing,labrousse2010tiling}, given that $\gcd(5,6) = 1$, this is possible for all $n > 5 \cdot 6 - 5 - 6 = 19$.
		It can be checked by hand that the remaining values $n \in \{5,6,10,11,12,15,16,17,18\}$ can be expressed as integer linear combinations of $5$ and $6$; see also the tilings in Figure~\ref{fig:code_tilings} (top row).
	\end{proof}
	
	Next we use the ability to shave off $5\times n$ rectangles to reduce a large enough rectangle down to an even-by-even rectangle, which can then be tiled by \sq22 squares.
	
	\begin{lemma} \label{lem:20x20}
	An $m \times n$ rectangle with $m, n \geq 20$ can be tiled by \sq22, \sq33, and \sq55 squares.
	\end{lemma}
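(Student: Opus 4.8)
The plan is to reduce the general \sq mn rectangle to an even-by-even rectangle---which is tiled trivially by \sq22 squares---by slicing off at most two strips of width~$5$ and invoking Lemma~\ref{lem:5xn} on each. The key observation is that since $5$ is odd, removing a strip of width~$5$ flips the parity of the corresponding side length, while Lemma~\ref{lem:5xn} tiles a \sq5k strip whenever $k \geq 20$ or $k$ lies in the stated finite exception set. So the whole argument is a short case analysis on the parities of $m$ and $n$, carried out as two independent ``parity-fixing'' passes.

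First I would handle the side of length~$m$. If $m$ is even, leave the rectangle untouched; if $m$ is odd, cut off a \sq5n strip spanning the full width. Because $n \geq 20$, Lemma~\ref{lem:5xn} tiles this strip \emph{regardless of the parity of~$n$}, and the residual \sq{(m-5)}n rectangle now has even height. Either way we are left with an \sq Mn rectangle whose height $M$ is even and satisfies $M \geq 16$ (indeed $M = m \geq 20$ when $m$ is even, and $M = m-5 \geq 16$ when $m$ is odd, using $m \geq 21$). Next I would handle the side of length~$n$ in the same way: if $n$ is even we are already done, and otherwise I cut off an \sq M5 strip spanning the full height. Viewed as a \sq5M rectangle, its long side $M$ is even with $M \geq 16$, so $M \in \{16,18\}$ or $M \geq 20$, and each such value is covered by Lemma~\ref{lem:5xn}. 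The remaining \sq M{(n-5)} rectangle then has both sides even, and I finish by tiling it with \sq22 squares.

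The one point requiring care---and the reason Lemma~\ref{lem:5xn} is stated with its particular finite list of exceptions---is that the strip long-sides arising in the second pass are exactly the even integers at least~$16$, and the borderline values $16$ and $18$ lie below the Frobenius threshold~$20$. Thus the lemma must (and does) list $16$ and $18$ explicitly, and confirming this inclusion is the only genuine obstacle; once it is checked, every case of the parity analysis terminates in an even-by-even rectangle and the construction is complete. I expect no deeper difficulty, since the bound $m,n \geq 20$ is exactly what keeps both residual dimensions in the tileable regime after the two width-$5$ cuts.
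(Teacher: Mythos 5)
Your proof is correct and follows essentially the same approach as the paper: shave off a \sq5n strip when $m$ is odd and an \sq M5 strip when $n$ is odd (both handled by Lemma~\ref{lem:5xn}, using that the even values $16$ and $18$ appear in its exceptional list), then finish the even-by-even remainder with \sq22 squares. Your version is, if anything, slightly more explicit than the paper's in treating the case $m$ even, $n$ odd directly rather than by implicit symmetry.
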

	
	\begin{proof}
		First, if side lengths $m$ and $n$ are both even, then we can tile the rectangle with \sq22 squares.

		Next, if $m$ is odd, then we apply Lemma~\ref{lem:5xn} to tile a $5 \times n$ rectangle on the top, leaving an \sq{(m-5)}n rectangle to tile  where $m-5$ is even. If $n$ is even, we are done, via \sq22 squares.

		Finally, if $n$ is also odd, then we again apply Lemma~\ref{lem:5xn} to tile a \sq{(m-5)}{5} rectangle on the left. Note that $m-5 \geq 15$ and even, so Lemma~\ref{lem:5xn} applies. This leaves us with an \sq{(m-5)}{(n-5)} rectangle to tile, which is even by even, so we can pack it with \sq22 squares.
	\end{proof}
	
	The remaining large cases are \sq mn rectangles where $m \in [6, 19]$.

	\begin{theorem}
	An $m \times n$ rectangle with $m \geq 6$ and $n \geq 20$ can be tiled by \sq22, \sq33, \sq55, and \sq77 squares.
	\end{theorem}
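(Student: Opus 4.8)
The plan is to handle the remaining strip $m \in [6,19]$ (the case $m \geq 20$ is already covered by Lemma~\ref{lem:20x20}) by reducing every such rectangle to pieces we can already tile, using two trivial combination moves: tilings of \sq A{B} and \sq A{C} concatenate horizontally into a tiling of \sq A{(B+C)}, and symmetrically they stack vertically. The two workhorses are Lemma~\ref{lem:5xn}, which lets us peel a \sq5n strip off the top whenever $n \geq 20$ (thereby lowering an odd height to an even one), and the fact that any even-by-even rectangle is filled by \sq22 squares. With these in hand, the only genuine content is controlling parity when a side is odd.

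First I would dispose of the even heights $m \in \{6,8,10,12,14,16,18\}$. If $n$ is even, then \sq mn is even-by-even and tiled by \sq22 squares. If $n$ is odd, I peel a single odd-width strip \sq m{c_m} for a small tileable width $c_m$, leaving \sq m{(n-c_m)} with $n - c_m$ even and positive, which \sq22 squares then fill. Each base strip is easy to exhibit: $c_m = 3$ for $m \in \{6,12,18\}$ (a vertical column of \sq33 squares), $c_m = 5$ for $m \in \{10,16\}$, $c_{14} = 7$ (two stacked \sq77 squares), and $c_8 = 15$ via \sq5{15} stacked on \sq3{15}; alternatively, each of these small-width rectangles is already recorded as tileable in Table~\ref{tab:code_output}.

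Next, for the odd heights $m \in \{11,13,15,17,19\}$, I peel a \sq5n strip off the top (legal since $n \geq 20$), leaving \sq{(m-5)}{n} whose height $m-5 \in \{6,8,10,12,14\}$ is even and at least $6$. This remainder is tileable by the even-height argument above, for either parity of $n$, which finishes all five of these cases immediately.

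The hard part will be the two thinnest odd heights, $m = 7$ and $m = 9$: here peeling \sq5n leaves an \sq2n or \sq4n strip, which the thin-rectangle analysis shows is tileable only for even $n$, so the generic reduction breaks for odd $n$, and this is exactly where the \sq77 square earns its place. For $m = 9$ and odd $n$, I instead peel the width-$3$ strip \sq93 (three \sq33 squares), reducing to \sq9{(n-3)} of even width at least $18$, which splits as \sq5{(n-3)} over \sq4{(n-3)} (both tileable). For $m = 7$ and odd $n$, I take $n = 21$ as three \sq77 squares, and for odd $n \geq 23$ peel one \sq77 square to leave \sq7{(n-7)} of even width at least $16$, tiled as \sq5{(n-7)} over \sq2{(n-7)}; jumping to width at least $16$ deliberately steps over width $14$, the unique even width at which Lemma~\ref{lem:5xn} fails to tile \sq5{14}. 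I expect this parity bookkeeping in the thin cases, together with routing around the missing \sq5{14} tiling, to be the only delicate point; everything else is routine concatenation.
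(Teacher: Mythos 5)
Your proof is correct, and it takes a genuinely different route from the paper's. The two agree on the first steps---Lemma~\ref{lem:20x20} for $m \geq 20$, and peeling strips whose tileability comes from Lemma~\ref{lem:5xn}---but diverge on the core cases $m \in [6,19]$. The paper handles odd $m$ by a periodicity-plus-table argument: a tiling of \sq{m}{n} extends to \sq{m}{(n+6i)} by appending \sq{m}{6} blocks (the \sq{5}{6} tiling plus rows of \sq{2}{2} squares), and rows 9, 11, 13, 15, 17, 19 of Table~\ref{tab:code_output} each contain six consecutive \checkmark s, with row 7 patched by \sq{7}{20} (two copies of \sq{7}{10}) and \sq{7}{21} (three \sq{7}{7} squares). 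You instead reduce odd $m \geq 11$ to even $m$ by peeling a full-width \sq{5}{n} strip, settle even $m$ with odd $n$ by peeling an explicit odd-width strip (width 3, 5, 7, or 15 according to $m$), and treat $m = 7, 9$ by hand; every piece is built explicitly from Lemma~\ref{lem:5xn} and trivial strips, so statement \ref{char:large} becomes independent of the computer search, which the paper's proof genuinely needs (its table entries for $14 \leq n \leq 19$). Your version is also more careful on even $m$: the paper disposes of all even $m$ by tiling an \sq{m}{5} strip via Lemma~\ref{lem:5xn}, which silently fails for $m \in \{8,14\}$ since neither 8 nor 14 is a nonnegative integer combination of 5 and 6; that gap is easily repaired by the paper's own periodicity argument (rows 8 and 14 also contain six consecutive \checkmark s), but your width-15 and width-7 strips for those two heights avoid it outright, just as you correctly route around the non-tileable \sq{5}{14}. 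What the paper's route buys in exchange is brevity: since Table~\ref{tab:code_output} is needed anyway for statement \ref{char:small}, one uniform argument dispatches all the odd rows at once, while your approach pays with case-by-case parity bookkeeping.
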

	
	\begin{proof}
		If $m \geq 20$, then we are done by Lemma~\ref{lem:20x20}.
		If $m$ is even, then we apply Lemma~\ref{lem:5xn} to tile an $m \times 5$ rectangle on the top, leaving an \sq m{(n-5)} rectangle to tile, where $m$ and $n-5$ are even so we can use \sq22 squares.
		Thus we can assume $m$ is odd and $7 \leq m \leq 19$.

		Next we provide a way to extend tilings to increase $n$ by multiples of~$6$.
		We can tile an \sq m6 rectangle (for any odd $m \geq 5$) by taking the \sq56 tiling and adding rows of \sq22 squares.
		Thus, if we have a tiling of an \sq mn rectangle, then we can extend it to an \sq m{(n+6i)} rectangle by adding $i \geq 0$ copies of this \sq m6 tiling on the right.

		Now consider the finite cases in Table~\ref{tab:code_output}.
		If we have a row of six \checkmark s in a row starting at \sq mn, then we have tilings for \sq m{(n+i)} rectangles for $i \in \{0,1,2,3,4,5\}$.
		By the above extension, we obtain tilings for \sq m{n'} rectangles for all $n' \geq n$.
		Indeed, rows 9, 11, 13, 15, 17, and 19 of Table~\ref{tab:code_output} have such a sequence of six \checkmark s.
		Row 7 has only four \checkmark s at the end, but the row continues with two more: \sq 7{20} can be tiled by doubling the \sq 7{10} packing, and \sq 7{21} can be tiled by \sq77 squares.
		Thus, in all cases, we obtain \sq mn tilings for all $n \geq 20$.
		The tilings use at most \sq77 squares because the original tilings did; see Figure~\ref{fig:code_tilings} and Section~\ref{sec:small}.
	\end{proof}

	

	\subsection{Small Rectangles \ref{char:small}}
	\label{sec:small}

	To resolve the remaining cases, we implemented a brute-force algorithm to test which rectangles can be packed with integer squares of side length at least 2.
	The code is available as open source \cite{github}.
	We ran this algorithm on all rectangles from \sq22 up to \sq{19}{19}.
	Table~\ref{tab:code_output} shows the binary results: was a tiling found to be possible (\checkmark) or impossible (empty)?
	Figure~\ref{fig:code_tilings} shows the found tilings, up to symmetry; to reduce the number of tilings from 129 to 59, we exclude rectangles whose two side lengths have a common divisor $d \geq 2$, which have an easy packing by $d \times d$ squares.
	All tilings use only \sq22, \sq33, \sq55, and \sq77 squares. (Although the search was not restricted to these squares, it preferred them by trying squares in order of increasing size.)

	This concludes our proof of the characterization of Section~\ref{sec:characterization}.

	Amusingly, tilings for all but one of these rectangles can be found by the following divide-and-conquer approach.
	Start from the one-square tilings of the \sq22, \sq33, \sq55, and \sq77 squares (by themselves).
	Repeatedly join two found smaller tilings along a matching edge length.
	The only exception is the \sq{11}{13} rectangle, which has a tiling that cannot be built in this way.
	Of course, we still need brute force to verify that no other rectangles can be tiled.
	
	\begin{figure}
		\centering
		\def\PARSE#1x#2y{\def\nx{#1}\def\ny{#2}}
		\def\OUT#1{%
			\PARSE#1y%
			\subcaptionbox{\sq{\nx}{\ny}}{\includegraphics[scale=0.45]{code/output/#1}}%
		}

		\OUT{5x6}
		\OUT{5x11}
		\OUT{5x12}
		\OUT{5x16}
		\OUT{5x17}
		\OUT{5x18}

		\OUT{6x7}
		\OUT{6x11}
		\OUT{6x13}
		\OUT{6x17}
		\OUT{6x19}

		\OUT{7x10}
		\OUT{7x12}
		\OUT{7x13}
		\OUT{7x16}
		\OUT{7x17}
		\OUT{7x18}
		\OUT{7x19}

		\OUT{8x15}
		\OUT{8x17}
		\OUT{8x19}

		\OUT{9x10}
		\OUT{9x13}
		\OUT{9x14}
		\OUT{9x16}
		\OUT{9x17}
		\OUT{9x19}

		\OUT{10x11}
		\OUT{10x13}
		\OUT{10x17}
		\OUT{10x19}

		\OUT{11x12}
		\OUT{11x13}
		\OUT{11x14}
		\OUT{11x15}
		\OUT{11x16}
		\OUT{11x17}
		\OUT{11x18}
		\OUT{11x19}

		\OUT{12x13}
		\OUT{12x17}
		\OUT{12x19}

		\OUT{13x14}
		\OUT{13x15}
		\OUT{13x16}
		\OUT{13x17}
		\OUT{13x18}
		\OUT{13x19}

		\OUT{14x15}
		\OUT{14x17}
		\OUT{14x19}
		\qquad
		\OUT{15x16}
		\OUT{15x17}
		\OUT{15x19}

		\OUT{16x17}
		\OUT{16x19}
		\qquad
		\OUT{17x18}
		\OUT{17x19}
		\qquad
		\OUT{18x19}
		\caption{Tilings found by brute force \cite{github}, corresponding to \checkmark s in
			Table~\ref{tab:code_output} except for dimensions with a common factor.
			\sq22, \sq33, \sq55, and \sq77 squares are purple, teal, yellow, and red, respectively.}
		\label{fig:code_tilings}
	\end{figure}

	\section{Open Problems}
	Our tilings use only \sq22, \sq33, \sq55, and \sq77 squares, and our tilings for $m,n \geq 20$ do not need \sq77 squares.
	But our tilings for \sq 7n, \sq {13}n, and \sq{17}n rectangles seem to need \sq77 infinitely often; can we prove this?
	
	One could also consider counting the number of solutions. Heubach studied this problem for for up to \sq5n rectangles when \sq11 squares are allowed in the tiling~\cite{heubach1999tiling}.

	Another generalization of our problem is to higher dimensions.
	For example, in 3D, what $x \times y \times z$ integer boxes can be tiled by $k \times k \times k$ integer cubes with $k \geq 2$?
	Such tilings are known to exist if $x$, $y$, and $z$ are all sufficiently large \cite{bower2006packing,labrousse2010tiling}, but what about when one or more dimensions are small?
	
	\let\realbibitem=\bibitem
	\def\bibitem{\par \vspace{-1.2ex}\realbibitem}

	\bibliographystyle{alphakey}
	\bibliography{main}{}

\end{document}